\documentclass[11pt]{article}       
\usepackage{amsthm}
\usepackage{amssymb}
\usepackage{float}
\usepackage{lineno}
\usepackage{geometry}
\geometry{a4paper,left=2cm,right=2cm, top=2cm, bottom=2cm} 
\usepackage{setspace}\onehalfspace
\newtheorem{theorem}{Theorem}[section]
\newtheorem{claim}{Claim}

\begin{document}

\title{The Geodetic Hull Number is Hard for Chordal Graphs}

\date{}

\author{St\'ephane Bessy$^1$ \and
Mitre C. Dourado$^2$ \and
Lucia D. Penso$^3$ \and 
Dieter Rautenbach$^3$}

\maketitle

\begin{center}
{\small
$^1$ 
Laboratoire d'Informatique, de Robotique et de Micro\'{e}lectronique de Montpellier\\
Universit\'e de Montpellier, Montpellier, France, stephane.bessy@lirmm.fr\\[3mm]
$^2$ Departamento de Ci\^{e}ncia da Computa\c{c}\~{a}o, Instituto de Matem\'{a}tica\\
Universidade Federal do Rio de Janeiro, Rio de Janeiro, Brazil, mitre@dcc.ufrj.br\\[3mm]
$^3$ Institute of Optimization and Operations Research\\ 
Ulm University, Ulm, Germany, $\{$lucia.penso,dieter.rautenbach$\}$@uni-ulm.de}
\end{center}

\begin{abstract}
We show the hardness of the geodetic hull number for chordal graphs.
\end{abstract}

{\small
\begin{tabular}{lp{12.5cm}}
\textbf{Keywords:} & Geodetic convexity; shortest path; hull number; chordal graphs
\end{tabular}
}

\section{Introduction} \label{sec:int}

One of the most well studied convexity notions for graphs is the {\it shortest path convexity} or {\it geodetic convexity},
where a set $X$ of vertices of a graph $G$ is considered {\it convex}
if no vertex outside of $S$ lies on a shortest path between two vertices inside of $S$.
Defining the {\it convex hull} of a set $S$ of vertices as the smallest convex set containing $S$,
a natural parameter of $G$ is its {\it hull number} $h(G)$ \cite{es},
which is the minimum order of a set of vertices whose convex hull is the entire vertex set of $G$.
The hull number is NP-hard 
for bipartite graphs \cite{acgnss}, 
partial cubes \cite{ak}, and 
$P_9$-free graphs \cite{dpr},
but it can be computed in polynomial time for 
cographs \cite{dgkps}, 
$(q,q-4)$-graphs \cite{acgnss}, 
$\{ {\rm paw},P_5\}$-free graphs \cite{amssw,dpr}, and
distance-hereditary graphs \cite{kn}.
Bounds on the hull number are given in \cite{acgnss,dprs,es}.

In \cite{kn} Kant\'e and Nourine present a polynomial time algorithm for the computation of the hull number of chordal graphs.
Unfortunately, their correctness proof contains a gap 
described in detail at the end of the present paper.
As our main result we show that computing the hull number of a chordal graph is NP-hard,
which most likely rules out the existence of a polynomial time algorithm.

Before we proceed to our results, we collect some notation and terminology.
We consider finite, simple, and undirected graphs.
A graph $G$ has vertex set $V(G)$ and edge set $E(G)$.
A graph $G$ is {\it chordal} if it does not contain an induced cycle of order at least $4$.
A {\it clique} in $G$ is the vertex set of a complete subgraph of $G$.
A vertex of a graph $G$ is {\it simplicial} in $G$ if its neighborhood is a clique.
The {\it distance} ${\rm dist}_G(u,v)$ between two vertices $u$ and $v$ in $G$ 
is the minimum number of edges of a path in $G$ between $u$ and $v$. 
The {\it diameter} ${\rm diam}(G)$ of $G$ is the maximum distance between any two vertices of $G$.
The {\it eccentricity} $e_G(u)$ of a vertex $u$ of $G$ is the maximum distance between $u$ and any other vertex of $G$. 
For a positive integer $k$, let $[k]$ be the set of the positive integers at most $k$.

Let $G$ be a graph, and let $S$ be a set of vertices of $G$.
The {\it interval} $I_G(S)$ of $S$ in $G$
is the set of all vertices of $G$ that lie on shortest paths in $G$ between vertices from $S$. 
Note that $S\subseteq I_G(S)$,
and that $S$ is {\it convex} in $G$ if $I_G(S)=S$. 
The set $S$ is {\it concave} in $G$ 
if $V(G) \setminus S$ is convex. 
Note that $S$ is concave if and only if 
$S \cap I_G(\{v,w\}) = \emptyset$ 
for every two vertices $v$ and $w$ in $V(G)\setminus S$. 
The {\it hull} $H_G(S)$ of $S$ in $G$,
defined as the smallest convex set in $G$ that contains $S$,
equals the intersection of all convex sets that contain $S$.
The set $S$ is a {\it hull set} if $H_G(S)=V(G)$,
and the {\it hull number} $h(G)$ of $G$ \cite{dpr,es} 
is the smallest order of a hull set of $G$.

\section{Result}

We immediately proceed to our main result.

\begin{theorem}\label{theorem1}
For a given chordal graph $G$, and a given integer $k$, it is NP-complete to decide whether the hull number $h(G)$ of $G$ is at most $k$.
\end{theorem}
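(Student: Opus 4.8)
The plan is to prove membership in NP by a direct fixed-point computation, and NP-hardness by a reduction from $3$-SAT to a carefully engineered chordal graph.

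For membership in NP, take as certificate a set $S\subseteq V(G)$ with $|S|\le k$. Starting from $S_0=S$ and iterating $S_{i+1}=I_G(S_i)$, the sequence stabilizes after at most $|V(G)|$ steps at $H_G(S)$; each step is polynomial, since after one breadth-first search from every vertex one can test, for each vertex $v$ and each pair $a,b\in S_i$, whether ${\rm dist}_G(a,v)+{\rm dist}_G(v,b)={\rm dist}_G(a,b)$. One accepts exactly when the limit equals $V(G)$.

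For the hardness I would reduce from $3$-SAT. Given a formula $\phi$ with variables $x_1,\ldots,x_n$ and clauses $C_1,\ldots,C_m$, build a chordal graph $G_\phi$ and set $k=|F|+n$, where $F$ is the set of simplicial vertices of $G_\phi$. The guiding fact is that every simplicial vertex lies in every hull set: if $v$ is simplicial then it is never an internal vertex of a shortest path (its two neighbours on such a path would be adjacent), hence $V(G)\setminus\{v\}$ is convex, so $v\notin H_G(S)$ whenever $v\notin S$. Accordingly, attach to a small clique core --- which keeps all distances bounded and makes chordality easy to certify by a perfect elimination ordering --- two non-simplicial ``literal'' vertices $x_i,\overline{x_i}$ per variable and one non-simplicial ``clause'' vertex $c_j$ per clause, and wire the adjacencies so that: (i) $H_{G_\phi}(F)$ already contains the core and all of the skeleton but none of the literal or clause vertices; (ii) for each variable, covering $x_i$ and $\overline{x_i}$ forces at least one vertex of that variable gadget into any hull set, while adding either of $x_i,\overline{x_i}$ pulls in the other; (iii) the clause vertex $c_j$ enters the hull precisely when some literal occurring in $C_j$ has been added. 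Since $k=|F|+n$, a hull set of that size can afford exactly one extra vertex per variable gadget, and $h(G_\phi)\le k$ should hold if and only if there is a choice of one literal per variable pulling in every clause vertex, that is, if and only if $\phi$ is satisfiable.

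The easy directions are NP membership and the implication ``$\phi$ satisfiable $\Rightarrow h(G_\phi)\le k$'', which only requires computing the hull of $F$ together with the literal vertices picked by a satisfying assignment. The main obstacle is soundness: a hull set of size $k$ need not be of the intended form --- it might use clause vertices, core vertices, or both literals of one variable --- and the hull operation cascades, so I must show that every size-$k$ hull set contains all of $F$, uses at most $n$ further vertices, and admits no such ``unintended'' configuration, thereby pinning those vertices down to one consistent literal per variable satisfying every clause. This is exactly where the gadgets must be rigid, and it is why everything is routed through a clique core: distances in $G_\phi$ have to be controlled tightly enough that shortest paths pass only through the intended vertices. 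A secondary, purely structural task is to verify that $G_\phi$ is chordal, namely that each gadget preserves a perfect elimination ordering and creates no long induced cycle.
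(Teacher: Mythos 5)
Your NP-membership argument is fine, and your high-level plan for the hardness reduction is in fact the same strategy the paper uses: a clique core keeping the diameter small, a set $F$ of simplicial vertices that must lie in every hull set (your observation that a simplicial vertex is never internal to a shortest path is exactly the right reason), a budget of $k=|F|+n$ allowing one extra vertex per variable, and clause vertices that enter the hull only via chosen literals. However, what you have written is a specification of the gadget, not a construction of it, and the entire difficulty of the theorem lives in the part you defer. You never exhibit the adjacencies of the variable gadget, and consequently you never prove the rigidity properties you yourself identify as the main obstacle. In the paper these take the form of two concavity claims: for each variable $i$ a small set (there $\{x_i,z_i,\bar{x}_i\}$) is concave, so that no vertex of it can ever be generated by the hull operation and hence at least one of them must be bought explicitly by any hull set; and for each clause $j$ a set $V_j$ containing $c_j$ together with the attachment vertices of its literals is concave, so that $c_j$ can only be reached if the corresponding literal vertex itself was bought. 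Proving such concavity statements requires a detailed case analysis of shortest paths, and it is exactly this analysis that constrains the design: in a chordal graph with a dominating clique every shortest path has length at most $3$, and one must arrange neighborhoods so that no unintended path of length $2$ or $3$ passes through the protected vertices. Without the explicit gadget and these verifications, the soundness direction (every hull set of size $k$ yields a satisfying assignment) is not established, so the proposal as it stands has a genuine gap rather than a complete proof.

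A further concrete point: reducing from unrestricted $3$-SAT is likely to make the gadget design harder than necessary, because a literal vertex occurring in many clauses needs many clause attachments, and each attachment creates new short paths that threaten both chordality and the concavity claims. The paper instead reduces from a restricted satisfiability variant (every clause has at most three literals, every variable occurs positively in exactly two clauses and negatively in exactly one), which bounds the degree of the literal vertices and makes the case analysis finite and uniform. If you flesh out your plan, you should either adopt such a bounded-occurrence variant or explain how your gadget tolerates unbounded occurrence, and you must also handle the possibility you mention only in passing, namely that a hull set spends its extra vertex on a non-literal vertex of a variable gadget (the paper's choice set $\{x_i,z_i,\bar{x}_i\}$ deliberately admits the neutral vertex $z_i$, and the clause-side concavity claim is what still forces a satisfying assignment in that case).
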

\begin{proof}
Since the hull of a set of vertices of $G$ can be computed in polynomial time, the considered decision problem belongs to NP. 
In order to prove NP-completeness, we describe a polynomial reduction from a restricted version of {\sc Satisfiability}.
Therefore,  
let ${\cal C}$ be an instance of {\sc Satisfiability} 
consisting of $m$ clauses $C_1,\ldots,C_m$ 
over $n$ boolean variables $x_1,\ldots,x_n$
such that 
every clause in ${\cal C}$ contains at most three literals, and, 
for every variable $x_i$, 
there are exactly two clauses in ${\cal C}$,
say $C_{j_i^{(1)}}$ and  $C_{j_i^{(2)}}$, 
that contain the literal $x_i$,
and exactly one clause in ${\cal C}$, 
say $C_{j_i^{(3)}}$, 
that contains the literal $\bar{x}_i$, 
and these three clauses are distinct.
Using a polynomial reduction from [LO1] \cite{gj},
it has been shown in \cite{dpr} 
that {\sc Satisfiability} restricted to such instances 
is still NP-complete.

\begin{figure}[H]
\label{fig1}
\begin{center}
\unitlength 1.65mm 
\linethickness{0.4pt}
\ifx\plotpoint\undefined\newsavebox{\plotpoint}\fi 
\begin{picture}(87,39)(0,0)
\put(8,5){\circle*{1}}
\put(8,15){\circle*{1}}
\put(18,15){\circle*{1}}
\put(23,5){\circle*{1}}
\put(28,15){\circle*{1}}
\put(38,5){\circle*{1}}
\put(23,22){\circle*{1}}
\put(38,15){\circle*{1}}
\put(23,32){\circle*{1}}
\put(44,32){\circle*{1}}
\put(53,5){\circle*{1}}
\put(63,5){\circle*{1}}
\put(53,15){\circle*{1}}
\put(58,32){\circle*{1}}
\put(63,15){\circle*{1}}
\put(8,15){\line(1,0){10}}
\put(18,15){\line(1,-2){5}}
\put(23,5){\line(-3,2){15}}
\put(28,15){\line(-1,-2){5}}
\put(23,5){\line(3,2){15}}
\put(53,15){\line(1,-1){10}}
\put(63,5){\line(0,1){10}}
\put(63,15){\line(-1,0){10}}
\put(23,32){\line(0,-1){10}}
\multiput(23,22)(-.0437317784,-.0204081633){343}{\line(-1,0){.0437317784}}
\multiput(23,22)(.0437317784,-.0204081633){343}{\line(1,0){.0437317784}}
\put(38,15){\line(0,-1){10}}
\put(53,15){\line(0,-1){10}}
\put(36,4){\oval(62,8)[]}
\put(38,1){\makebox(0,0)[cc]{}}
\put(23,3){\makebox(0,0)[cc]{$y_i$}}
\put(63,3){\makebox(0,0)[cc]{$\bar{y}_i$}}
\put(46,30){\makebox(0,0)[cc]{$z_i$}}
\put(45,5){\makebox(0,0)[cc]{$B$}}
\put(44,37){\makebox(0,0)[cc]{$Z \setminus \{z_i\}$}}
\put(20.5,17){\makebox(0,0)[cc]{$x'^1_i$}}
\put(26.5,17){\makebox(0,0)[cc]{$x'^2_i$}}
\put(21,24){\makebox(0,0)[cc]{$x'_i$}}
\put(8,3){\makebox(0,0)[cc]{$c_j$}}
\put(38,3){\makebox(0,0)[cc]{$c_k$}}
\put(53,3){\makebox(0,0)[cc]{$c_{\ell}$}}
\qbezier(23,32)(53.5,21.5)(38,5)
\qbezier(23,32)(-7.5,21.5)(8,5)
\qbezier(23,22)(51.5,22.5)(38,5)
\qbezier(23,22)(-5.5,22.5)(8,5)
\put(37.5,17.5){\makebox(0,0)[cc]{$x^2_i$}}
\put(8,15){\line(0,-1){10}}
\put(8,5){\line(0,1){0}}
\put(23,22){\line(0,-1){17}}
\put(23,5){\line(0,1){0}}
\put(8,17.5){\makebox(0,0)[cc]{$x^1_i$}}
\put(21,29.5){\makebox(0,0)[cc]{$x_i$}}
\put(51,17){\makebox(0,0)[cc]{${\bar x}'_i$}}
\put(64,17){\makebox(0,0)[cc]{${\bar x}''_i$}}
\put(60,29){\makebox(0,0)[cc]{${\bar x}_i$}}
\qbezier(23,32)(59,11.5)(23,5)
\qbezier(51,37)(87,35.5)(67,4)
\put(28,15){\line(1,0){10}}
\put(44,35){\line(0,-1){27}}
\put(44,37){\oval(14,4)[]}
\put(23,32){\line(1,0){35}}
\multiput(58,32)(-.0204081633,-.0693877551){245}{\line(0,-1){.0693877551}}
\qbezier(58,32)(36.5,25.5)(53,5)
\qbezier(58,32)(79.5,25.5)(63,5)
\end{picture}
\end{center}
\caption{The vertices and edge added for the variable $x_i$, where 
$j_i^{(1)}=j$,
$j_i^{(2)}=k$, and
$j_i^{(3)}=\ell$.}
\end{figure}
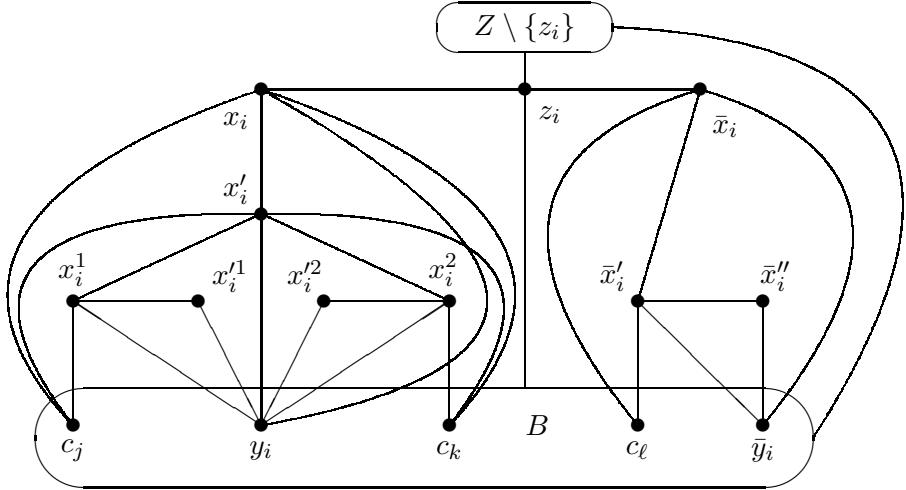
\noindent Let the graph $G$ be constructed as follows starting with the empty graph:
\begin{itemize}
\item For every $j\in [m]$, add a vertex $c_j$. 
\item For every $i\in [n]$, add three $y_i,\bar{y}_i,$ and $z_i$. 
\item Add edges such that $B \cup Z$ is a clique, where 
\begin{eqnarray*}
B& = &\{c_j : j \in [m]\} \cup \{y_i : i \in [n] \}\cup \{\bar{y}_i : i \in [n] \}
\mbox{ and }\\
Z &=& \{z_i : i \in [n]\}\mbox{, and }
\end{eqnarray*}
\item For every $i\in [n]$, add $9$ vertices and $25$ edges to obtain the subgraph indicated in Figure \ref{fig1}.
\end{itemize}
Note that ${\rm dist}_G(x_i,\bar{x}_i')={\rm dist}_G(\bar{x}_i,x'^1_i)=3$ for every $i$ in $[n]$.
Since every vertex of $G$ has a neighbor in the clique $B \cup Z$, 
the diameter of $G$ is $3$.
Furthermore, since no vertex is universal, 
all vertices in $B \cup Z$ have eccentricity $2$.

Let $k = 4n$. 

Note that the order of $G$ is $12n+m$. 

It remains to show that $G$ is chordal, 
and that $\cal C$ is satisfiable if and only if $h(G)\leq k$.

\medskip

\noindent In order to show that $G$ is chordal,
we indicate a {\it perfect elimination ordering}, 
which is a linear ordering $v_1,\ldots,v_{12n+m}$ of its vertices such that $v_i$ is simplicial in $G-\{v_1, \ldots v_{i-1}\}$ for every $i$ in $[12n+m]$. 
Such an ordering is obtained by 
\begin{itemize}
\item starting with the vertices $x'^1_i, x'^2_i,$ and $\bar{x}''_i$ for all $i \in [n]$ (in any order),
\item continuing with the vertices $x^1_i, x^2_i,$ and $\bar{x}'_i$ for all $i \in [n]$,
\item continuing with the vertices $x'_i$ for all $i \in [n]$,
\item continuing with the vertices $x_i$ and $\bar{x}_i$ for all $i \in [n]$, and
\item ending with the vertices in the clique $B \cup Z$.
\end{itemize}
Now, let $\cal S$ be a satisfying truth assignment for $\cal C$.

Let 
$$S=\bigcup_{i\in [n]}\left\{x'^1_i, x'^2_i,\bar{x}''_i\right\}\,\,
\cup\,\,\bigcup_{i\in [n]:\,\,x_i\,\,true\,\,in\,\,{\cal S}}\left\{ x_i\right\}\,\,
\cup\,\,\bigcup_{i\in [n]:\,\,x_i\,\,false\,\,in\,\,{\cal S}}\left\{ \bar{x}_i\right\}.$$
Clearly, $|S|=k=4n$.
For every $i$ in $[n]$,
we have 
$\{z_i,\bar{y}_i\} \subseteq I_G(\{ x_i,\bar{x}''_i\})$,
$\{z_i,y_i\} \subseteq I_G(\{ \bar{x}_i,x'^1_i\})$,
$y_i \in I_G(\{ \bar{y}_i,x'^1_i\})$, and 
$\bar{y}_i \in I_G(\{ y_i,\bar{x}''_i\})$,
which implies $\{z_i,y_i,\bar{y}_i\} \subseteq H_G(S)$.
Since ${\cal S}$ is a satisfying truth assignment, 
for every $j$ in $[m]$, 
there is a neighbor, say $v$, of $c_j$ in 
$$\bigcup_{i\in [n]:\,\,x_i\,\,true\,\,in\,\,{\cal S}}\left\{ x_i\right\}\,\,\cup\,\,\bigcup_{i\in [n]:\,\,x_i\,\,false\,\,in\,\,{\cal S}}\left\{ \bar{x}_i\right\}.$$ 
If $v \in \bigcup\limits_{i\in [n]:\,\,x_i\,\,true\,\,in\,\,{\cal S}}\left\{ x_i\right\}$, then $c_j \in I_G(\{ v,\bar{x}''_i\})$, otherwise $c_j \in I_G(\{ v,x'^1_i\})$. 
Hence, $B \cup Z\subseteq H_G(S)$.

Now, for some $i$ in $[n]$, 
let $c_j$, $c_k$, and $c_\ell$ 
be the neighbors in 
$B \setminus \{ y_i,\bar{y}_i\}$ of $x^1_i$, $x^2_i$, and $\bar{x}'_i$, respectively, 
similarly as in Figure \ref{fig1}. 
We have
$x^1_i \in I_G(\{ x'^1_i,c_j\})$, 
$x^2_i \in I_G(\{ x'^2_i,c_k\})$, 
$x_i' \in I_G(\{ x^1_i,x^2_i\})$, 
$\bar{x}'_i \in I_G(\{ \bar{x}''_i,c_\ell\})$,
$x_i \in I_G(\{ x_i',z_i\})$, and 
$\bar{x}_i\in I_G(\{ \bar{x}'_i,z_i\})$. 

Altogether, we obtain that $S$ is a hull set of $G$ of order $4n$.

\medskip

\noindent Finally, let $S$ be a hull set of $G$ of order at most $4n$. 

\begin{claim}\label{claim1}
For every $i \in [n]$, the set $\{x_i,z_i,\bar{x}_i\}$ is concave.
\end{claim} 
\noindent {\it Proof of Claim \ref{claim1}:}
For a contradiction, suppose that some vertex in $S'=\{x_i,z_i,\bar{x}_i\}$
lies on a shortest path $P$ in $G$ between two vertices $v$ and $w$ in $V(G)\setminus S'$.
Since the diameter of $G$ is $3$, 
the path $P$ contains at most $2$ vertices of $S'$. 
Since the neighbors outside of $S'$ of each vertex in $S'$ form a clique, 
the path $P$ contains exactly $2$ adjacent vertices of $S'$,
that is, either $P = vx_iz_iw$ or $P = v\bar{x}_iz_iw$.
In both cases, the vertex $w$ has eccentricity at least $3$.
However, every neighbor $w$ of $z_i$ outside $S'$ belongs to $B \cup Z$, 
and thus, has eccentricity $2$, a contradiction.
$\Box$

\begin{claim}\label{claim2}
For every $j \in [m]$, the set
$$V_j=\{ c_j\}
\cup\,\,\bigcup_{i\in [n]:j=j_i^{(1)}}\left\{ x_i,x'_i,x_i^{1}\right\}\,\,
\cup\,\,\bigcup_{i\in [n]:j=j_i^{(2)}}\left\{ x_i,x'_i,x_i^{2}\right\}\,\,
\cup\,\,\bigcup_{i\in [n]:j=j_i^{(3)}}\left\{ \bar{x}_i,\bar{x}'_i\right\}$$
is concave.
\end{claim}
\noindent {\it Proof of Claim \ref{claim2}:}
First, suppose that $C_j$ contains the positive literal $x_i$.
By symmetry, 
we may assume that $j=j_i^{(1)}$ and $j_i^{(2)}=k$
for some $k$ in $[m]\setminus \{ j\}$.

First, suppose that some shortest path $P$ 
between two vertices $v$ and $w$ in $\bar{V}_j=V(G)\setminus V_j$
contains $x_i$.
Choosing $P$ of minimum length, 
it follows that $v$ and $w$ are the only vertices of $P$ in $\bar{V}_j$.
Since the diameter of $G$ is $3$,
the length of $P$ is at most $3$, 
and we may assume that $v$ is a neighbor of $x_i$,
which implies $v \in \{z_i, c_k, y_i\}$. 
Since $\{z_i, c_k, y_i\}$ is a clique, 
the vertex $w$ is not a neighbor of $x_i$, 
and $P$ contains exactly one vertex $u$ of $V_j$ different of $x_i$,
which implies $P=vx_iuw$ and $u \in \{x'_i, c_j\}$. 
Suppose that $u = x'_i$.
This implies $w \in \{x^2_i, c_k,y_i\}$. 
Since $c_k,y_i \in N_G(x_i)$, 
we obtain $w = x^2_i$ and $v = z_i$. 
However, ${\rm dist}_G(z_i,x^2_i) = 2$,
which is a contradiction.
Hence, $u = c_j$ and $w \in B \cup Z$. 
However, every vertex in $B \cup Z$ has eccentricity $2$,
which is a contradiction. 
Hence, no shortest path between two vertices in $\bar{V}_j$ contains $x_i$.

Next, suppose that some shortest path $P$ 
between two vertices $v$ and $w$ in $\bar{V}_j$
contains $x_i'$.
Similarly as above, 
we may assume that 
$v$ and $w$ are the only vertices of $P$ in $\bar{V}_j$,
the length of $P$ is at most $3$, 
and $v$ is a neighbor of $x_i'$,
which implies $v\in \{x^2_i, y_i, c_k\}$.
Since $\{x^2_i, y_i, c_k\}$ is a clique, 
the path $P$ contains exactly one vertex $u$ of $V_j$ different of $x_i'$,
which implies $P=vx'_iuw$ and $u \in \{x_i^1, c_j\}$,
where we use that $P$ does not contain $x_i$.
Suppose that $u = x_i^1$.
This implies $w \in \{ x'^1_i,y_i\}$. 
Since $y_i \in N_G(x'_i)$, 
we obtain $w =x'^1_i$
and $v =x^2_i$.
However, ${\rm dist}_G(x^2_i,x'^1_i) = 2$,
which is a contradiction.
Hence, $u = c_j$ and $w \in B \cup Z$. 
However, every vertex in $B \cup Z$ has eccentricity $2$,
which is a contradiction. 
Hence, no shortest path between two vertices in $\bar{V}_j$ contains $x'_i$.

Next, suppose that some shortest path $P$ 
between two vertices $v$ and $w$ in $\bar{V}_j$
contains $x_i^1$.
Similarly as above, 
we may assume that 
$v$ and $w$ are the only vertices of $P$ in $\bar{V}_j$,
the length of $P$ is at most $3$, 
and $v$ is a neighbor of $x_i^1$,
which implies $v\in \{ x'^1_i,y_i\}$.
Since $\{ x'^1_i,y_i\}$ is a clique, 
the path $P$ contains exactly one vertex $u$ of $V_j$ different of $x_i^1$,
which implies $P=vx_i^1c_jw$ and $w\in B\cup Z$,
where we use that $P$ does not contain $x'_i$. 
However, every vertex in $B \cup Z$ has eccentricity $2$,
which is a contradiction. 
Hence, no shortest path between two vertices in $\bar{V}_j$ contains $x_i^1$.

\medskip

\noindent Next, suppose that $C_j$ contains the negative literal $\bar{x}_i$,
that is, $j=j_i^{(3)}$.

First, suppose that some shortest path $P$ 
between two vertices $v$ and $w$ in $\bar{V}_j$
contains $\bar{x}_i$.
Similarly as above, 
we may assume that 
$v$ and $w$ are the only vertices of $P$ in $\bar{V}_j$,
the length of $P$ is at most $3$, 
and $v$ is a neighbor of $\bar{x}_i$,
which implies $v\in \{ z_i,\bar{y}_i\}$.
Since $\{ z_i,\bar{y}_i\}$ is a clique, 
the vertex $w$ is not a neighbor of $\bar{x}_i$, 
and $P$ contains exactly one vertex $u$ of $V_j$ different of $\bar{x}_i$,
which implies $P=v\bar{x}_iuw$ and $u \in \{\bar{x}'_i, c_j\}$. 
Suppose that $u = \bar{x}'_i$.
This implies $w \in \{ \bar{x}''_i,\bar{y}_i\}$. 
Since $\bar{y}_i\in N_G(\bar{x}_i)$, 
we obtain $v = z_i$ and $w =\bar{x}''_i$. 
However, ${\rm dist}_G(z_i,\bar{x}''_i) = 2$,
which is a contradiction.
Hence, $u = c_j$ and $w \in B \cup Z$. 
However, every vertex in $B \cup Z$ has eccentricity $2$,
which is a contradiction. 
Hence, no shortest path between two vertices in $\bar{V}_j$ contains $\bar{x}_i$.

Next, suppose that some shortest path $P$ 
between two vertices $v$ and $w$ in $\bar{V}_j$
contains $\bar{x}_i'$.
Similarly as above, 
we may assume that 
$v$ and $w$ are the only vertices of $P$ in $\bar{V}_j$,
the length of $P$ is at most $3$, 
and $v$ is a neighbor of $\bar{x}_i'$,
which implies $v\in \{ \bar{x}_i'',\bar{y}_i\}$.
Since $\{\bar{x}_i'',\bar{y}_i\}$ is a clique, 
the path $P$ contains exactly one vertex $u$ of $V_j$ different of $\bar{x}_i'$,
which implies $P=v\bar{x}_i'c_jw$ and $w\in B\cup Z$,
where we use that $P$ does not contain $\bar{x}_i$. 
However, every vertex in $B \cup Z$ has eccentricity $2$,
which is a contradiction. 
Hence, no shortest path between two vertices in $\bar{V}_j$ contains $\bar{x}_i'$.

Finally, since the neighbors of $c_j$ outside of $V_j$ form a clique,
no shortest path between two vertices in $\bar{V}_j$ contains $c_j$,
which completes the proof of the claim.
$\Box$

\medskip

\noindent Note that all $3n$ simplicial vertices in 
$\bigcup\limits_{i\in [n]}\left\{x'^1_i, x'^2_i,\bar{x}''_i\right\}$
belong to $S$.

Since $S$ contains at most $n$ non-simplicial vertices,
Claim \ref{claim1} implies that, for every $i$ in $[n]$,
the set $S$ contains exactly one of the three vertices in $\{x_i,z_i,\bar{x}_i\}$,
and that these are the only non-simplicial vertices in $S$.
Now, Claim \ref{claim2} implies that, for every $j$ in $[m]$,
there is some $i\in [n]$ such that
\begin{itemize}
\item either $C_j$ contains the literal $x_i$ and the vertex $x_i$ belongs to $S$
\item or $C_j$ contains the literal $\bar{x}_i$ and the vertex $\bar{x}_i$ belongs to $S$.
\end{itemize}
Therefore, setting the variable $x_i$ to true if and only if the vertex $x_i$ belongs to $S$
yields a satisfying truth assignment ${\cal S}$ for ${\cal C}$,
which completes the proof. 
\end{proof}
\noindent As pointed out in the introduction, the correctness proof in \cite{kn} contains a gap.
In lines 14 and 15 on page 322 of \cite{kn} it says 
\begin{quote}
``{\it At iteration $i+1$, the vertex $x_{i+1}$ is a simplicial vertex in $G_{i+1}$. 
We first claim that there exists no functional dependency of the form $zt\to x_{i+1}$ in $\Sigma$.}''
\end{quote}
Consider applying the algorithm from \cite{kn} to the graph in Figure \ref{figex}.
In iteration 1, it would decide to add $x_1$ to $K$.
In iteration 2, it would decide not to add $x_2$ to $K$, because of $t\to x_2$. 
Furthermore, because of $t\to x_2$ and $z,x_2\to x_3$, it would replace $z,x_2\to x_3$ within $\Sigma$ with $z,t\to x_3$.
Therefore, in iteration 3, $\Sigma$ would actually contain $z,t\to x_3$, contrary to the claim cited above.

\begin{figure}[H]
\label{figex}
\begin{center}
\unitlength 1.2mm 
\linethickness{0.4pt}
\ifx\plotpoint\undefined\newsavebox{\plotpoint}\fi 
\begin{picture}(66,15)(0,0)
\put(5,5){\circle*{1.5}}
\put(20,5){\circle*{1.5}}
\put(35,5){\circle*{1.5}}
\put(50,5){\circle*{1.5}}
\put(65,5){\circle*{1.5}}
\put(5,0){\makebox(0,0)[cc]{$x_1$}}
\put(20,0){\makebox(0,0)[cc]{$x_2$}}
\put(35,0){\makebox(0,0)[cc]{$x_3$}}
\put(50,0){\makebox(0,0)[cc]{$t$}}
\put(65,0){\makebox(0,0)[cc]{$z$}}
\put(65,5){\line(-1,0){60}}
\qbezier(20,5)(35,15)(50,5)
\qbezier(65,5)(50,15)(35,5)
\end{picture}
\end{center}
\caption{A small chordal graph.}
\end{figure}
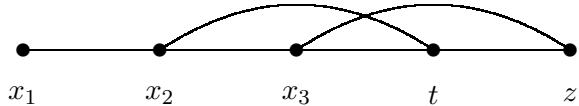

\end{document}